\newcommand{\etal}{{et~al.}}
\title{Acute Tours in the Plane
}
\author{Ahmad Biniaz\thanks{This research is supported by NSERC.}
}
\affil{School of Computer Science, University of Windsor\\\texttt{ahmad.biniaz@gmail.com}}
\date{}
\newtheorem{lemma}{Lemma}
\newtheorem{theorem}{Theorem}
\newtheorem{observation}{Observation}
\newtheorem*{problem*}{Problem}
\newtheorem*{claim*}{Claim}
\newtheorem*{invariant*}{Invariant}
\renewcommand\footnotemark{}
\begin{document}
	\maketitle
	\vspace{-10pt}
	\begin{abstract}
		We confirm the following conjecture of Fekete and Woeginger from 1997: for any sufficiently large even
		number $n$, every set of $n$ points in the plane can be connected by a spanning tour (Hamiltonian cycle) consisting of straight-line edges such that the angle between any two consecutive edges is at most $\pi/2$. Our proof is constructive and suggests a simple $O(n\log n)$-time algorithm for finding such a tour. The previous best-known upper bound on the angle is $2\pi/3$, and it is due to Dumitrescu, Pach and T{\'{o}}th (2009). 
	\end{abstract}

\section{Introduction}

The Euclidean traveling salesperson problem (TSP) is a well-studied and fundamental problem in combinatorial optimization and computational geometry. In this problem we are given a set of points in the plane and our goal is to find a shortest tour that visits all points. Motivated by applications in robotics and motion planning, in recent years there has been an increased interest in the study of tours with bounded angles at vertices, rather than bounded length of edges; see e.g. \cite{Aggarwal1999,Aichholzer2017,Dumitrescu12,Fekete1992,Fekete1997} and references therein.  Bounded-angle structures (tours, paths, trees) are also desirable in the context of designing networks with directional antennas \cite{Aschner2017,Aschner2012,Carmi2011,Tran2017}. Bounded-angle tours (and paths), in particular, have received considerable attention following the PhD thesis of S. Fekete \cite{Fekete1992} and the seminal work of Fekete and Woeginger \cite{Fekete1997}.

Consider a set $P$ of at least three points in the plane. A {\em spanning tour} is a directed Hamiltonian cycle on $P$ that is drawn with straight-line edges. When three consecutive vertices $p_i, p_{i+1},p_{i+2}$ of the tour are traversed in this order, the {\em rotation angle} at $p_{i+1}$ (denoted by $\angle p_ip_{i+1}p_{i+2}$) is the angle in $[0,\pi]$ that is determined by the segments $p_ip_{i+1}$ and $p_{i+1}p_{i+2}$. 
If all rotation angles in a tour are at most $\pi/2$ then it is called an {\em acute} tour.

In 1997, Fekete and Woeginger \cite{Fekete1997} raised many challenging questions about bounded-angle tours and paths. In particular they conjectured that {\em for any sufficiently large even number $n$, every set of $n$ points in the plane admits an acute spanning tour $($a tour with rotation angles at most $\pi/2$$)$}. They stated the conjecture specifically for $n\geqslant 8$. The point set illustrated in Figure~\ref{lower-bound-fig}(a) (also described in \cite{Fekete1997}) shows that the upper bound $\pi/2$ is the best achievable. The conjecture does not hold if $n$ is allowed to be an odd number; for example if the $n$ points are on a line then in any spanning tour one of the rotation angles must be $\pi$. The conjecture also does not hold if $n$ is allowed to be small. For instance the 4-element point set consisting of the 3 vertices of an equilateral triangle with its center, must have a rotation angle $2\pi/3$ in any spanning tour. Also the 6-element point set of Figure~\ref{lower-bound-fig}(b) (also illustrated in \cite{Fekete1997} and \cite{Dumitrescu12}) must have a rotation angle of at least $2\pi/3-\epsilon$ in any spanning tour, for some arbitrary small constant $\epsilon$.

\begin{figure}[htb]
	\centering
	\setlength{\tabcolsep}{0in}
	$\begin{tabular}{cc}
		\multicolumn{1}{m{.47\columnwidth}}{\centering\includegraphics[width=.26\columnwidth]{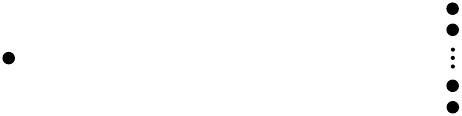}}
		&\multicolumn{1}{m{.53\columnwidth}}{\centering\vspace{0pt}\includegraphics[width=.38\columnwidth]{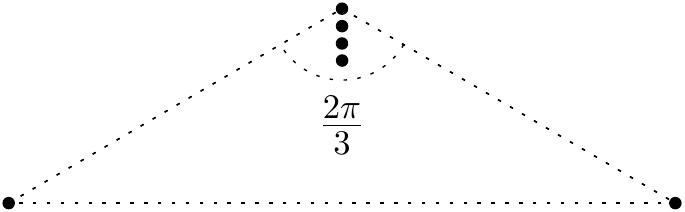}}	
		\\
		(a) &(b)  
	\end{tabular}$
	\caption{(a) a general lower bound example, and (b) a lower bound example for $6$ points.}
	\label{lower-bound-fig}
\end{figure}

In 2009, Dumitrescu, Pach and T{\'{o}}th \cite{Dumitrescu12} took the first promising steps towards proving the conjecture. They confirmed the conjecture for points in convex position. For general point sets, they obtained the first partial result by showing that any point set (with even number of points) admits a spanning tour in which each rotation angle is at most $2\pi/3$. 

In this paper we prove the conjecture of Fekete and Woeginger for general point sets.

\begin{theorem}
	\label{main-thr}
	Let $n\geqslant 20$ be an even integer. Then every set of $n$ points in the plane admits an acute spanning tour. Such a tour can be computed in linear time after finding an equitable partitioning of points with two orthogonal lines.
\end{theorem}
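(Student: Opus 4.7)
The plan is to build the acute tour from an equitable partition of $P$ by two orthogonal lines into four equal-size quadrants, and then combine two bipartite alternating Hamiltonian cycles on opposite-quadrant pairs into a single tour via a carefully chosen edge swap.

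First, I would invoke a standard continuity argument to find two orthogonal lines $\ell_1 \perp \ell_2$ through a point $O$ so that each of the four open quadrants $Q_1, Q_2, Q_3, Q_4$ contains exactly $n/4$ points (for $n$ divisible by $4$; $n \equiv 2 \pmod 4$ needing only minor adjustments). Rotating a horizontal halving line of $P$ while tracking the balance of a perpendicular halving line through $\pi/2$ proves existence; the partition can be computed in $O(n \log n)$ time. I place coordinates so that $\ell_1, \ell_2$ are the axes, $O$ the origin, and label the quadrants counterclockwise starting from NE.

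The crucial geometric observation is this: for any $p \in Q_i$ and any two points $a, b \in Q_{i+2}$ (indices modulo $4$), the vectors $a-p$ and $b-p$ have, in each coordinate, the same sign, so they lie in a common open $\pi/2$-sector at $p$; hence $(a-p) \cdot (b-p) > 0$ and $\angle apb < \pi/2$. Using this, I construct two bipartite alternating Hamiltonian cycles: $C_{13}$ on $Q_1 \cup Q_3$ and $C_{24}$ on $Q_2 \cup Q_4$. In each cycle every vertex has both neighbors in its opposite quadrant, so every rotation angle is strictly below $\pi/2$.

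The remaining step is to merge $C_{13}$ and $C_{24}$ into a single spanning cycle of $P$ while preserving the acute property. I would select an edge $(u, v) \in C_{13}$ with $u \in Q_1, v \in Q_3$ and an edge $(w, z) \in C_{24}$ with $w \in Q_2, z \in Q_4$ whose endpoints lie near the coordinate axes (for example $u, w$ with smallest $|x|$ in their respective quadrants, $v, z$ similarly). Any such edge can be forced into its cycle by using the abundant freedom in constructing bipartite alternating cycles on balanced classes. I then remove $(u, v), (w, z)$ and insert the cross-edges $(u, w), (v, z)$, fusing the two cycles into a single spanning tour.

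The main obstacle is verifying that the four modified rotation angles at $u, v, w, z$ are all at most $\pi/2$. After the swap, for example, $u \in Q_1$ retains its $Q_3$-neighbor (giving a SW-pointing vector) and acquires a $Q_2$-neighbor (giving a near-West-pointing vector); the angle between these is acute precisely when a certain coordinate inequality holds, which is not automatic. Controlling all four angles simultaneously will require exploiting both the freedom in choosing the alternating cycles and the freedom in selecting the boundary vertices. I expect the hypothesis $n \geq 20$ (at least five points per quadrant) to enter here, allowing an extremal / pigeonhole argument that a compatible swap exists. A careful case analysis of the four vertices' positions relative to their neighbors in $C_{13}$ and $C_{24}$ is expected to finish the proof, with the linear-time claim following because once the partition is fixed the alternating cycles and the swap are combinatorially trivial.
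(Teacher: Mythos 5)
Your setup (the equitable four-quadrant partition, the observation that a point sees any two points of the opposite quadrant within an acute angle, and the two alternating structures on $S_1\cup S_3$ and $S_2\cup S_4$) matches the paper. But the step you defer --- ``a compatible swap exists'' --- is not a technical loose end; it is the entire difficulty of the problem, and your proposal contains no argument for it. Note what your swap actually requires: after deleting $(u,v)$ and $(w,z)$ and inserting $(u,w)$ and $(v,z)$, the four-point set consisting of $u$, $w$ and their surviving neighbours $v''\in S_3$, $z''\in S_4$ must form an acute path $v''uwz''$ (a path $S_3\to S_1\to S_2\to S_4$), and symmetrically $u''vzw''$ must be an acute path $S_1\to S_3\to S_4\to S_2$. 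These are exactly the ``downward'' and ``upward'' hooks of the paper, so your construction is precisely the paper's Case~1: it succeeds if and only if you can exhibit two disjoint quadruples, one upward and one downward. The paper proves (Lemmas~\ref{convex-acute-quadruple-lemma} and~\ref{obtuse-quadruple-lemma2}) that convex and concave-acute quadruples always come with both orientations of one axis, but a concave-obtuse quadruple may admit only \emph{one} of the four hook types. Consequently it can happen that every quadruple you form is, say, upward but not downward, and then no choice of $u,v,w,z$ and no freedom in routing the alternating cycles makes all four junction angles acute simultaneously; pigeonhole on five points per quadrant does not rescue this.

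The paper handles exactly this failure mode with a separate construction (its Case~2): when three of the chosen quadruples are concave-obtuse and all of the same type, it uses the fact that their centers must sit at the \emph{endpoints} of the acute hook paths, pigeonholes two centers into the same quadrant, and builds an asymmetric tour out of three hooks plus two alternating paths, with extra below/left-of arguments to certify the new angles. Nothing in your proposal anticipates this case or provides a substitute for it. To complete your argument you would either have to prove that a compatible upward--downward (or leftward--rightward) pair of junction quadruples always exists --- which is false as stated and is why the paper needs its second case --- or develop an alternative merging gadget for the all-one-type situation. As written, the proof has a genuine gap at its central step.
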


Due to our desire of having a short proof, we prove the conjecture for $n\geqslant 20$. Perhaps with some detailed case analysis one could extend the range of $n$ to a number smaller than $20$.
\paragraph{Difficulties towards a proof.} Fekete and Woeginger \cite{Fekete1997} exhibited an arbitrary-large even-size point set for which an algorithm (or a proof technique), that always outputs the longest tour or includes the diameter in the solution, does not achieve an acute tour; the point set is similar to that of Figure~\ref{lower-bound-fig}(b) but has more than $6$ points. This somehow breaks the hope for finding an acute tour by using greedy techniques. Therefore, to prove the conjecture one might need to employ some nontrivial ideas.  

\subsection{Related problems}

Another interesting conjecture of Fekete and Woeginger \cite{Fekete1997} is that any set of points in the plane admits a spanning path in which all rotation angles are at least $\pi/6$.\footnote{This bound is the best achievable as the three vertices of an equilateral triangle together with its center do not admit a path with rotation angles greater than $\pi/6$.} In 2008, B{\'{a}}r{\'{a}}ny, P{\'{o}}r, and
Valtr \cite{Barany2009} obtained the first constant lower bound of $\pi/9$, thereby gave a partial answer to the conjecture. The full conjecture was then proved, although not yet written in a paper format, by J. Kyn\v{c}l \cite{Kyncl2019} (see also the note added in the proof of \cite{Barany2009}).

Fekete and Woeginger \cite{Fekete1997} showed that any set of points in the plane admits an acute spanning path (where all intermediate rotation angles are at most $\pi/2$). Such a path can be obtained simply by starting from an arbitrary point and iteratively connecting the current point to its farthest among the remaining points. 
Notice that the resulting path always contains the diameter and by the difficulties mentioned above it cannot be completed to an acute tour. 
Carmi \etal~\cite{Carmi2011} showed how to construct acute paths with shorter edges; again no guarantee to be completed to an acute tour. 
Aichholzer \etal~\cite{Aichholzer2013} studied a similar problem with an additional constraint that the path should be {\em plane} (i.e., its edges do not cross each other). Among other results, they showed that any set of points in the plane in general position admits a plane spanning path with rotation angles at most $3\pi/4$. They also conjectured that this upper bound could be replaced by $\pi/2$.

The bounded-angle minimum spanning tree (also known as $\alpha$-MST) is a related problem that asks for a Euclidean minimum spanning tree in which all edges incident to every vertex lie in a cone of angle at most $\alpha$. This problem is motivated by replacing omni-directional antennas---in a wireless network---with directional antennas which are more secure, require lower transmission ranges, and cause less interference; see e.g. \cite{Aschner2017,Aschner2012,Biniaz2020,Biniaz2022,Tran2017}.

Another related problem (with an  objective somewhat opposite to ours) is to minimize the total {\em turning angle} of the tour \cite{Aggarwal1999}.\footnote{The turning angle at a vertex $v$ is the change in the direction of motion at $v$ when traveling on the tour. It is essentially $\pi$ minus the rotation angle at $v$.} Similar problems also studied under {\em pseudo-convex} tours and paths (that make only right turns) \cite{Fekete1997} and {\em reflexivity} of a point set (the smallest number
of reflex vertices in a simple polygonalization of the point set) \cite{Ackerman2009,Arkin2003}.

The so-called {\em Tverberg cycle} is a cycle with straight-line edges such that the diametral disks\footnote{The diametral disk induced by an edge $pq$ is the disk that has $pq$ as its diameter.} induced by the edges have nonempty intersection. Recently, Pirahmad \etal~\cite{Pirahmad2021} showed how to construct a spanning Tverberg cycle on any set of points in the plane. Although the constructed cycle has many acute angles, it is still far from being fully acute. 

\paragraph{Remark.}It is worth mentioning that having a tour with many acute angles, does not necessarily help in getting a fully acute tour because one can simply get a tour with at least $n-2$ acute angles by interconnecting the endpoints of acute paths obtained in \cite{Carmi2011,Fekete1997}.

\section{Preliminaries for the proof}
\label{preliminaries}

A set of four points in the plane is called a {\em quadruple}. If the four points are in convex position then the quadruple is called {\em convex}, otherwise it is called {\em concave}; the quadruple in Figure~\ref{hook-fig}(a) is convex while the quadruples in Figures~\ref{hook-fig}(b) and \ref{hook-fig}(c) are concave. We refer to the interior point of a concave quadruple as its {\em center}. By connecting the center of a concave quadruple to its other three points we obtain three angles. If one of these angles is at most $\pi/2$ then the quadruple is called {\em concave-acute}, otherwise all the angles are larger than $\pi/2$ and the quadruple is called {\em concave-obtuse}; the quadruple in Figure~\ref{hook-fig}(b) is concave-acute while the one in Figure~\ref{hook-fig}(c) is concave-obtuse.

A path, that is drawn by straight-line edges, is called {\em acute} if all the angles determined by its adjacent edges are at most $\pi/2$.
For two directed paths $P_1$ and $P_2$, where $P_1$ ends at the same vertex at which $P_2$ starts, we denote their concatenation by $P_1\oplus P_2$. 

For two distinct points $p$ and $q$ in the plane, we say that $p$ is {\em to the left of} $q$ if the $x$-coordinate of $p$ is not larger than the $x$-coordinate of $q$.
Analogously, we say that $p$ is {\em below} $q$ if the $y$-coordinate of $p$ is not larger than the $y$-coordinate of $q$.

It is known that any set of $n$ points in the plane can be split into four parts of equal size using two orthogonal lines (see e.g. \cite{Roy2007} or \cite[Section 6.6]{Courant1979}); such two lines can be computed in $\Theta(n\log n)$ time \cite{Roy2007}. The following is a restatement of this result which is borrowed from \cite{Dumitrescu12}.

\begin{lemma}
	\label{equitable-partition-lemma}
	Given a set $S$ of $n$ points in the plane $(n$ even$)$, one can always find two orthogonal lines $\ell_1$, $\ell_2$ and a partition $S = S_1 \cup S_2 \cup S_3 \cup S_4$ with $|S_1| = |S_3| = \lfloor \frac{n}{4}\rfloor$ and $|S_2| = |S4| = \lceil \frac{n}{4}\rceil$ such that $S_1$ and $S_3$ belong to two opposite closed quadrants determined by $\ell_1$ and $\ell_2$, and $S_2$ and $S_4$ belong to the other two opposite closed quadrants.
\end{lemma}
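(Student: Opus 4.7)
The plan is to use a ham-sandwich style rotating-orthogonal-pair argument. For each angle $\theta\in[0,\pi/2]$ I produce two orthogonal bisecting lines of $S$ and track the cardinality of one of the four resulting quadrants as $\theta$ sweeps through a quarter turn; a discrete intermediate value argument then pins down the angle at which the desired partition exists.

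Fix $\theta$. Sweeping a line with normal direction $\theta$ across $S$ and using that $n$ is even, one obtains a line $\ell_1(\theta)$ whose two closed half-planes $H_1^+, H_1^-$ each contain at least $n/2$ points of $S$. Analogously there is an orthogonal line $\ell_2(\theta)$ with closed half-planes $H_2^+, H_2^-$ each containing at least $n/2$ points. Label the closed quadrants by $Q_1=H_1^+\cap H_2^+$, $Q_2=H_1^-\cap H_2^+$, $Q_3=H_1^-\cap H_2^-$, $Q_4=H_1^+\cap H_2^-$. Assuming for the moment that no point of $S$ lies on $\ell_1\cup\ell_2$, summing over $H_1^+$ and $H_2^+$ gives $|Q_1|+|Q_4|=n/2=|Q_1|+|Q_2|$, so $|Q_2|=|Q_4|$ and consequently $|Q_1|=|Q_3|$. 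Thus opposite quadrants automatically have equal size; it remains only to drive $|Q_1|$ into $\{\lfloor n/4\rfloor,\lceil n/4\rceil\}$.

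Define $f(\theta)=|Q_1(\theta)\cap S|$. Rotating the whole configuration by $\pi/2$ permutes the quadrants cyclically, sending the new $Q_1$ to the old $Q_2$, so $f(\pi/2)=|Q_2(0)|=n/2-f(0)$; in particular $f(0)$ and $f(\pi/2)$ lie on opposite sides of $n/4$, unless both equal $n/4$. As $\theta$ varies continuously, $f$ is integer-valued and changes only at finitely many combinatorial events, namely when a point of $S$ meets $\ell_1(\theta)$ or $\ell_2(\theta)$; using the closed-quadrant convention to assign ambiguous boundary points, each event can be arranged to change $f$ by at most one. A discrete intermediate value argument then produces some $\theta^*$ with $f(\theta^*)\in\{\lfloor n/4\rfloor,\lceil n/4\rceil\}$, and distributing the boundary points consistently yields the partition $S=S_1\cup S_2\cup S_3\cup S_4$ with the required sizes.

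The main obstacle is handling degeneracies: multiple points of $S$ may simultaneously meet $\ell_1(\theta^*)$ or $\ell_2(\theta^*)$, or a point may lie exactly at $\ell_1\cap\ell_2$, either of which could cause $f$ to jump by more than one unit between consecutive events. The delicate part is to verify that the slack inherent in the closed-quadrant convention is always sufficient to distribute these ambiguous points so that each of $\ell_1,\ell_2$ still bisects $S$ and both size conditions $|S_1|=|S_3|=\lfloor n/4\rfloor$, $|S_2|=|S_4|=\lceil n/4\rceil$ are met simultaneously. This boundary bookkeeping is the real work of the proof; the rotation argument above only pinpoints the angle at which to perform it.
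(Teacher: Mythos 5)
First, a point of reference: the paper does not prove this lemma at all --- it is quoted verbatim from Dumitrescu, Pach and T\'{o}th and attributed to Roy--Steiger and to Courant--Robbins, so there is no in-paper proof to compare against. Your rotating-orthogonal-pair argument is precisely the standard route taken in those references, and its skeleton is sound: opposite closed quadrants of two bisecting lines have equal cardinality, the identity $f(\pi/2)=n/2-f(0)$ forces $f$ to straddle $n/4$, and a discrete intermediate value argument locates the right direction.

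As written, however, the proposal is not yet a proof, and you say so yourself: the claim that ``each event can be arranged to change $f$ by at most one'' and the final distribution of the on-line points are exactly where the content of the lemma lives, and you defer both. Two things must actually be verified. (i) For a generic direction $\theta$ (not parallel or perpendicular to any line through two points of $S$, and with the $(n/2)$th and $(n/2{+}1)$st projections distinct) the bisecting lines can be chosen to contain no point of $S$, so $f$ is well defined and locally constant; at a non-generic $\theta_0$ several points may lie on $\ell_1(\theta_0)\cup\ell_2(\theta_0)$ simultaneously, and the two one-sided limits of $f$ can differ by more than one. The correct fix is not to bound the jump but to show that every integer between those one-sided limits is realized \emph{at} $\theta_0$ by some legal assignment of the on-line points to closed quadrants, where ``legal'' means each of $\ell_1,\ell_2$ still has at most $n/2$ points strictly on either side. (ii) Having forced $|S_1|=\lfloor n/4\rfloor$, you must check that the same assignment simultaneously yields $|S_2|=|S_4|=\lceil n/4\rceil$ and $|S_3|=\lfloor n/4\rfloor$; the identity $|Q_1|+|Q_2|=|Q_1|+|Q_4|=n/2$ has to be re-derived with each boundary point counted exactly once. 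Neither step is hard, but together they are the actual proof, so what you have should be regarded as a correct outline with the decisive step missing.
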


Our proof of Theorem~\ref{main-thr} shares some similarities with that of Dumitrescu \etal~\cite{Dumitrescu12} (for points in convex position) in the sense that both proofs employ the equitable partitioning of Lemma~\ref{equitable-partition-lemma}. However, there are major differences between the two proofs mainly because simple structures, that appear in points in convex position, do not necessarily appear in general point sets. Therefore one needs to extract complex structures from general point sets and combine them to establish a proof.


\section{Proof of Theorem~\ref{main-thr}}

Throughout this section we assume that $n$ is an even integer. We show how to construct an acute tour on any set of $n\geqslant 20$ points in the plane, and thus proving Theorem~\ref{main-thr}. In Subsection~\ref{setup-section} we describe the setup for our construction, and then in Subsection~\ref{tour-section} we construct the tour.  

\subsection{The proof setup}

\label{setup-section}

Let $S$ be a set of $n\geqslant 20$ points in the plane. Let $\{S_1, S_2, S_3, S_4\}$ be an equitable partitioning of $S$ with two orthogonal lines $\ell_1$ and $\ell_2$ that satisfies the conditions of Lemma~\ref{equitable-partition-lemma}. After a suitable rotation and translation we may assume that $\ell_1$ and $\ell_2$ coincide with the $x$ and $y$ coordinate axes, respectively. Also, after a suitable relabeling we may assume that all points of $S_i$ belong the $i$th quadrant determined by the axes as depicted in Figure~\ref{hook-fig}(a).

\begin{figure}[htb]
	\centering
	\setlength{\tabcolsep}{0in}
	$\begin{tabular}{ccc}
		\multicolumn{1}{m{.34\columnwidth}}{\centering\includegraphics[width=.28\columnwidth]{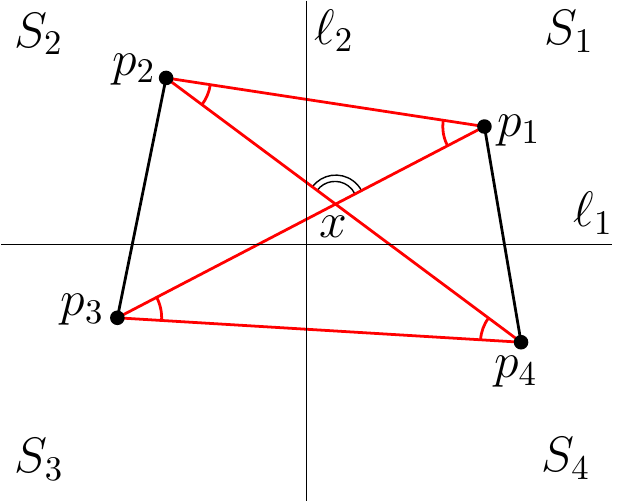}}
		&\multicolumn{1}{m{.33\columnwidth}}{\centering\vspace{0pt}\includegraphics[width=.28\columnwidth]{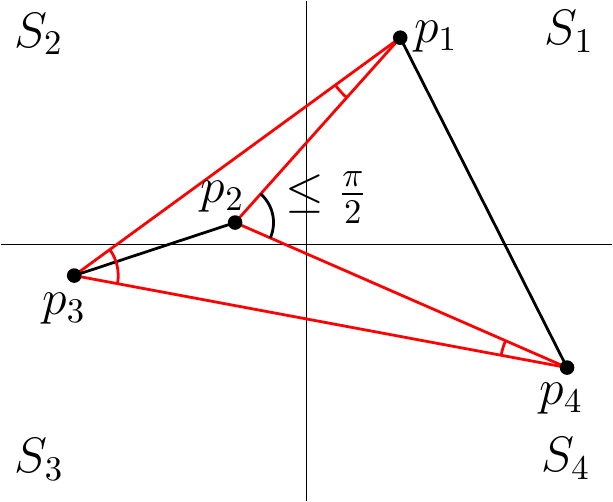}}
		&\multicolumn{1}{m{.33\columnwidth}}{\centering\vspace{0pt}\includegraphics[width=.28\columnwidth]{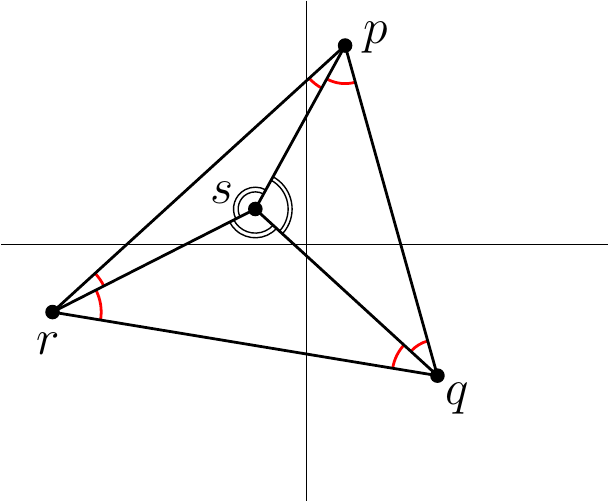}}
		\\
		(a) Convex quadruple&(b) Concave-acute quadruple &(c) Concave-obtuse quadruple
	\end{tabular}$
	\caption{Illustration of (a) Lemma~\ref{convex-acute-quadruple-lemma} where $P$ is convex and $\angle p_1xp_2\geqslant \pi/2$, (b) Lemma~\ref{convex-acute-quadruple-lemma} where $P$ is concave-acute and $\angle p_1p_2p_4\leqslant \pi/2$, and (c) Lemma~\ref{obtuse-quadruple-lemma} where all the three angles at $s$ are obtuse.}
	\label{hook-fig}
\end{figure}

Based on the above partitioning we introduce four types of quadruples. Let $P=\{p_1,p_2,p_3,p_4\}$ be a quadruple such that $p_i\in S_i$ for all $i=1,2,3,4$. We say that $P$ is {\em upward} if the path $p_2p_4p_3p_1$ (or equivalently $p_1p_3p_4p_2$) is acute, {\em downward} if the path $p_3p_1p_2p_4$ (or equivalently $p_4p_2p_1p_3$) is acute, {\em leftward} if the path $p_2p_4p_1p_3$ (or equivalently $p_3p_1p_4p_2$) is acute, and {\em rightward} if the path $p_1p_3p_2p_4$ (or equivalently $p_4p_2p_3p_1$) is acute. Such paths are referred to as ``hooks'' in \cite{Dumitrescu12}.
The following lemmas and observation, although very simple, play important roles in our proof. 

\begin{lemma}
	\label{convex-acute-quadruple-lemma}
	Let $P=\{p_1,p_2,p_3,p_4\}$ be a quadruple such that $p_i\in S_i$ for all $i=1,2,3,4$. If $P$ is convex or concave-acute then it is upward and downward or it is leftward and rightward. 
\end{lemma}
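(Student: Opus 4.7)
The plan is to handle the convex and concave-acute cases separately, exploiting in each the quadrant placement of the points.

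For the convex case, I would first observe that the quadrant constraint forces the cyclic order of the $p_i$'s along their convex hull to be $p_1 p_2 p_3 p_4$: any other cyclic order would promote two of $p_1 p_3$ and $p_2 p_4$ to hull edges, but these segments cross, since each travels between diagonally opposite quadrants. Let $x$ be the intersection of the diagonals $p_1 p_3$ and $p_2 p_4$, and set $\phi = \angle p_1 x p_2 = \angle p_3 x p_4$, so $\angle p_2 x p_3 = \angle p_4 x p_1 = \pi - \phi$. Assume without loss of generality that $\phi \geq \pi/2$. In the triangles $\triangle p_1 p_2 x$ and $\triangle p_3 p_4 x$, the two non-$x$ angles sum to $\pi - \phi \leq \pi/2$ in each triangle, so each of $\angle p_3 p_1 p_2$, $\angle p_1 p_2 p_4$, $\angle p_4 p_3 p_1$, $\angle p_2 p_4 p_3$ is at most $\pi/2$. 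These are precisely the four hook angles of the paths $p_2 p_4 p_3 p_1$ and $p_3 p_1 p_2 p_4$, so $P$ is simultaneously upward and downward. The case $\phi < \pi/2$ is symmetric (working with $\triangle p_2 p_3 x$ and $\triangle p_4 p_1 x$) and yields leftward-and-rightward.

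For the concave-acute case, let $c$ be the center of $P$. Since the $\pi/2$-rotational symmetry of the quadrant partition cyclically relabels the $p_i$'s and swaps the pair (upward, downward) with (leftward, rightward), it suffices to treat $c = p_1$. Denote the three angles at $p_1$ by $\alpha_{23}=\angle p_2 p_1 p_3$, $\alpha_{34}=\angle p_3 p_1 p_4$, $\alpha_{42}=\angle p_4 p_1 p_2$; each lies in $(0,\pi)$ and together they sum to $2\pi$.

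The step I expect to be the main obstacle is the following geometric claim: if $p_1$ lies strictly inside $\triangle p_2 p_3 p_4$, then $\alpha_{42} > \pi/2$. To prove it I would introduce the polar angles $\theta_j$ of the rays from $p_1$ to the other three points. The quadrant constraints give $\theta_2 \in (\pi/2, 3\pi/2)$, $\theta_3 \in (\pi, 3\pi/2)$, and $\theta_4 \in (\pi, 2\pi)$. If $\alpha_{42} \leq \pi/2$, a short argument localizes $\theta_4$ to $(\pi, \theta_2 + \pi/2)$, after which a case split on whether $\theta_2 \leq \pi$ or $\theta_2 > \pi$ shows that all three $\theta_j$ fit inside an angular arc of length at most $\pi$ around $p_1$. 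This contradicts the standard fact that $p_1 \in \triangle p_2 p_3 p_4$ requires the three rays from $p_1$ to span more than a half-plane.

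Granted the claim, the concave-acute hypothesis forces either $\alpha_{23} \leq \pi/2$ or $\alpha_{34} \leq \pi/2$. In the first subcase the other two angles at $p_1$ exceed $\pi/2$, so the angle sums in the sub-triangles $\triangle p_1 p_3 p_4$ and $\triangle p_1 p_2 p_4$ immediately bound four auxiliary angles (the cevian splits at $p_2, p_3, p_4$) above by $\pi/2$. A short bookkeeping computation then shows that the four hook angles needed for upward-and-downward, namely $\angle p_2 p_4 p_3$, $\angle p_4 p_3 p_1$, $\angle p_3 p_1 p_2$, $\angle p_1 p_2 p_4$, sum to $2\alpha_{23}$, one summand being $\alpha_{23}$ itself. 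Hence the remaining three non-negative summands have total at most $\alpha_{23} \leq \pi/2$ and are each at most $\pi/2$, so $P$ is upward and downward. The subcase $\alpha_{34} \leq \pi/2$ is entirely analogous (the sum of the leftward-and-rightward hook angles comes out to $2\alpha_{34}$) and yields leftward-and-rightward.
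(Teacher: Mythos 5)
Your proof is correct and follows essentially the same route as the paper: the convex case via the intersection point of the diagonals $p_1p_3$ and $p_2p_4$, and the concave-acute case via showing that the center's angle subtended by the two adjacent-quadrant points exceeds $\pi/2$ (so one of the other two angles at the center is acute) together with the triangle angle-sum identity that makes the acute center angle equal the sum of the remaining three hook angles. The only difference is that you supply an explicit polar-angle argument for the obtuseness claim, which the paper simply asserts with ``observe.''
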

\begin{proof}
	First assume that $P$ is convex. Let $x$ denote the intersection point of the diagonals $p_1p_3$ and $p_2p_4$. If $\angle p_1xp_2 \geqslant \pi/2$ then the paths $p_2p_4p_3p_1$ and $p_3p_1p_2p_4$ are acute and thus $P$ is upward and downward; see Figure~\ref{hook-fig}(a). If $\angle p_1xp_4 > \pi/2$ then the paths $p_2p_4p_1p_3$ and $p_1p_3p_2p_4$ are acute and thus $P$ is leftward and rightward.
	
	Now assume that $P$ is concave-acute. Without loss of generality we assume that $p_2$ is the center of $P$. Observe that in this case $\angle p_1p_2p_3$ is obtuse. This and the fact that $P$ is concave-acute imply that one of $\angle p_1p_2p_4$ and $\angle p_3p_2p_4$ is acute. If $\angle p_1p_2p_4$ is acute as depicted in Figure~\ref{hook-fig}(b) then the paths $p_2p_4p_3p_1$ and $p_3p_1p_2p_4$ are acute and thus $P$ is upward and downward (observe that $\angle p_2p_1p_3 + \angle p_1p_3p_4 + \angle p_3p_4p_2 = \angle p_1p_2p_4 \leqslant \pi/2$). Analogously, if $\angle p_3p_2p_4$ is acute then the paths $p_2p_4p_1p_3$ and $p_1p_3p_2p_4$ are acute and thus $P$ is leftward and rightward.
\end{proof}


\begin{lemma}
	\label{obtuse-quadruple-lemma}
	Let $\{p,q,r,s\}$ be a concave-obtuse quadruple with center $s$. Then all angles $\angle pqs$, $\angle qps$, $\angle qrs$, $\angle rqs$, $\angle rps$, and $\angle prs$ are acute.
\end{lemma}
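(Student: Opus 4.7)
The plan is to exploit the standard fact that if $s$ lies in the interior of the triangle with vertices $p,q,r$, then the three angles around $s$ satisfy
\[
\angle psq+\angle qsr+\angle rps=2\pi.
\]
First I would observe that this identity holds here since, by definition of ``center'' of a concave quadruple, $s$ lies in the interior of the convex hull of $\{p,q,r\}$. The hypothesis that the quadruple is concave-obtuse says precisely that each of the three summands $\angle psq$, $\angle qsr$, $\angle rps$ is strictly greater than $\pi/2$ (their sum being $2\pi$, which is consistent).

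Next I would apply the triangle angle-sum to each of the three triangles $\triangle psq$, $\triangle qsr$, $\triangle rps$ in turn. In $\triangle psq$, for instance, the angle at $s$ already exceeds $\pi/2$, so the remaining two angles $\angle pqs$ and $\angle qps$ together account for less than $\pi/2$, and therefore each of them is (strictly) acute. The same argument applied to $\triangle qsr$ yields that $\angle qrs$ and $\angle rqs$ are acute, and applied to $\triangle rps$ yields that $\angle rps$ and $\angle prs$ are acute. This covers all six angles listed in the statement.

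There is essentially no obstacle: the whole lemma is a one-line consequence of the angle-sum at an interior point together with the angle-sum in a triangle. The only thing worth being careful about is making explicit that $s$ really lies strictly inside $\triangle pqr$ (so the identity $\angle psq+\angle qsr+\angle rps=2\pi$ is valid, as opposed to $s$ being on an edge, which is excluded by general position / by the fact that all three angles at $s$ are strictly obtuse). With that noted, the proof reduces to invoking the two elementary facts above and concluding.
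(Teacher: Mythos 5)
Your proof is correct and follows essentially the same route as the paper: in each of the triangles $\triangle spq$, $\triangle sqr$, $\triangle srp$ the angle at $s$ is obtuse by the definition of concave-obtuse, so the other two angles are acute by the triangle angle-sum. (The preliminary observation that the three angles at $s$ sum to $2\pi$ is not actually needed, and note the small typo in your displayed identity, where the third summand should read $\angle rsp$ rather than $\angle rps$.)
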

\begin{proof}
	See Figure~\ref{hook-fig}(c). In each of the triangles $\bigtriangleup spq$, $\bigtriangleup sqr$, and $\bigtriangleup srp$ the angle at $s$ is larger than $\pi/2$. Thus the other two angles are acute.
\end{proof}

\begin{lemma}
	\label{obtuse-quadruple-lemma2}
	Let $P=\{p_1,p_2,p_3,p_4\}$ be a quadruple such that $p_i\in S_i$ for all $i=1,2,3,4$. If $P$ is concave-obtuse then it is upward, downward, leftward, or rightward. 
\end{lemma}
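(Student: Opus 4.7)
I would first reduce, by the rotational symmetry of the setup, to the case where $p_1$ is the center of $P$, and then finish using Lemma~\ref{obtuse-quadruple-lemma} together with the elementary fact that a triangle contains at most one non-acute interior angle.

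For the reduction, a counterclockwise rotation of the plane by $\pi/2$ cyclically permutes the four quadrants and so, after the corresponding cyclic relabeling of the $p_i$'s, it cyclically permutes the four direction-types. A direct inspection of the four hook paths shows that the rotated quadruple $P'$ is respectively upward, leftward, downward, or rightward if and only if the original $P$ is rightward, upward, leftward, or downward. Applying this rotation $0$, $1$, $2$, or $3$ times puts any chosen vertex of $P$ into the role of $p_1$, so it suffices to establish the lemma assuming that $p_1$ is the center.

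Suppose then that $p_1$ is the center. The three angles at $p_1$ all exceed $\pi/2$, and Lemma~\ref{obtuse-quadruple-lemma} (applied with $s = p_1$) gives that every angle at a non-center vertex with one side pointing to $p_1$ is acute; in particular $\angle p_4 p_3 p_1 \leqslant \pi/2$ and $\angle p_1 p_3 p_2 \leqslant \pi/2$. These are precisely the rotation angles at $p_3$ along the upward path $p_2 p_4 p_3 p_1$ and the rightward path $p_4 p_2 p_3 p_1$, so showing ``$P$ is upward'' reduces to checking $\angle p_2 p_4 p_3 \leqslant \pi/2$, while showing ``$P$ is rightward'' reduces to checking $\angle p_3 p_2 p_4 \leqslant \pi/2$.

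The punchline is that $\angle p_2 p_4 p_3$ and $\angle p_3 p_2 p_4$ are two of the three interior angles of triangle $p_2 p_3 p_4$ (which contains $p_1$), and since these angles sum to $\pi$ they cannot simultaneously exceed $\pi/2$. Therefore at least one of the two required inequalities holds, so $P$ is upward or rightward. I do not anticipate any real difficulty; the one step that deserves careful bookkeeping is verifying the induced cyclic permutation of the direction-types, which is a finite inspection of the four hook paths.
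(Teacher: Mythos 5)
Your proof is correct and follows essentially the same route as the paper: assume a fixed vertex is the center, use Lemma~\ref{obtuse-quadruple-lemma} to handle the rotation angle adjacent to the center, and use the fact that the outer triangle has at most one non-acute angle to get the remaining rotation angle for at least one of the two candidate hook paths (the paper simply takes $p_2$ as the center and concludes ``leftward or upward,'' while you take $p_1$ and conclude ``upward or rightward,'' with the rotation reduction made explicit). One trivial wording slip: the two angles $\angle p_2p_4p_3$ and $\angle p_3p_2p_4$ sum to \emph{less than} $\pi$ (it is all three interior angles that sum to $\pi$), but the conclusion that they cannot both exceed $\pi/2$ is of course still valid.
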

\begin{proof}
	Without loss of generality assume that $p_2$ is the center of $P$. See Figure~\ref{hook-fig}(c) where $p_2=s$. In the triangle $\bigtriangleup p_1p_3p_4$ the angle at $p_1$ or the angle at $p_3$ is acute. If the angle at $p_1$ is acute then the path $p_2p_4p_1p_3$ is acute and thus $P$ is leftward ($\angle p_2p_4p_1$ is acute by Lemma~\ref{obtuse-quadruple-lemma}). If the angle at $p_3$ is acute then the path $p_2p_4p_3p_1$ is acute and thus $P$ is upward ($\angle p_2p_4p_3$ is acute by Lemma~\ref{obtuse-quadruple-lemma}).
\end{proof}

\begin{observation}
\label{opposite-quadrantobs}
Let $p$, $q$, and $r$ be any three points in $S$ such that $q$ and $r$ lie in the quadrant that is opposite to the quadrant containing $p$. Then the angle $\angle qpr$ is acute.
\end{observation}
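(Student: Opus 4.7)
The plan is straightforward because the observation reduces to a single sign computation on an inner product. Following the setup of Subsection~\ref{setup-section}, I may assume the two orthogonal lines are the coordinate axes and each $S_i$ lies in the $i$th closed quadrant. By the symmetry of the configuration under rotation by $\pi/2$, it suffices to handle the case $p \in S_1$ with $q, r \in S_3$; the case $p \in S_2$ with $q, r \in S_4$, as well as the cases where $p$ lies in the opposite-sided quadrant instead, are identical after relabeling.

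With $p \in S_1$ and $q, r \in S_3$, I have $p_x, p_y \geq 0$ while $q_x, q_y, r_x, r_y \leq 0$. Hence both coordinates of the vector $\vec{pq} = (q_x - p_x,\, q_y - p_y)$ are non-positive, and likewise for $\vec{pr} = (r_x - p_x,\, r_y - p_y)$. Computing the inner product,
$$\vec{pq} \cdot \vec{pr} = (q_x - p_x)(r_x - p_x) + (q_y - p_y)(r_y - p_y),$$
each summand is a product of two non-positive numbers, so the sum is non-negative. A non-negative dot product means the angle $\angle qpr$ lies in $[0, \pi/2]$, which is ``acute'' in the sense used throughout the paper.

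There is essentially no obstacle here: geometrically the statement simply says that two rays emanating from a point in one closed quadrant toward points in the opposite closed quadrant cannot span more than a right angle, which follows because both rays lie in a common closed half-plane of each coordinate. The only edge case worth a sentence is when $p$ sits on one of the axes (or at the origin); the same sign analysis still produces a non-negative dot product since all the relevant factors retain a consistent sign. So the observation is immediate from the equitable-partitioning setup and needs no further machinery.
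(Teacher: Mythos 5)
Your proof is correct: the sign analysis of the dot product $\vec{pq}\cdot\vec{pr}$ is exactly the computation that justifies the claim, and you correctly note that ``acute'' here means at most $\pi/2$ so that a non-negative (possibly zero) dot product suffices. The paper states this as an observation without proof, but the intended argument is the same one you give (translate the origin to $p$ and note that $q$ and $r$ both lie in a closed quadrant, a cone of angular width $\pi/2$ with apex $p$), so your write-up simply makes explicit what the paper leaves implicit.
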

\subsection{The tour construction}
\label{tour-section}
In this section we show how to construct an acute tour on $S$ where $|S|\geqslant 20$. By Lemma~\ref{equitable-partition-lemma} each $S_i$ with $i\in\{1,2,3,4\}$ has at least $\lfloor 20/4\rfloor=5$ points. From each $S_i$ we select an arbitrary subset of 5 points, and then we partition (the total 20) selected points into 5 quadruples such that each quadruple contains exactly one point from each $S_i$. Let $\cal Q$ denote the set of these quadruples. For any quadruple $X$ in $\cal Q$ we denote the points of $X$ by $x_1,x_2,x_3,x_4$ where $x_i\in S_i$ for all $i=1,2,3,4$. 

Since $|{\cal Q}|\geqslant 5$, by the pigeonhole principle $\cal Q$
 has three quadruples that are {\em vertical} (i.e. upward, downward, or both upward and downward) or three that are {\em horizontal} (i.e. leftward, rightward, or both leftward and rightward). Without loss of generality assume that $Q$ has three vertical quadruples. If two of these vertical quadruples are of opposite types, i.e. one upward and one downward, then we construct a tour as in case 1 below. Otherwise, the three quadruples are concave-obtuse and of the same type in which case we construct a tour as in case 2 below. Our constructions take linear time in both cases.

\begin{wrapfigure}{r}{0.45\textwidth}
	\begin{center}
		\vspace{-22pt}
		\includegraphics[width=.42\textwidth]{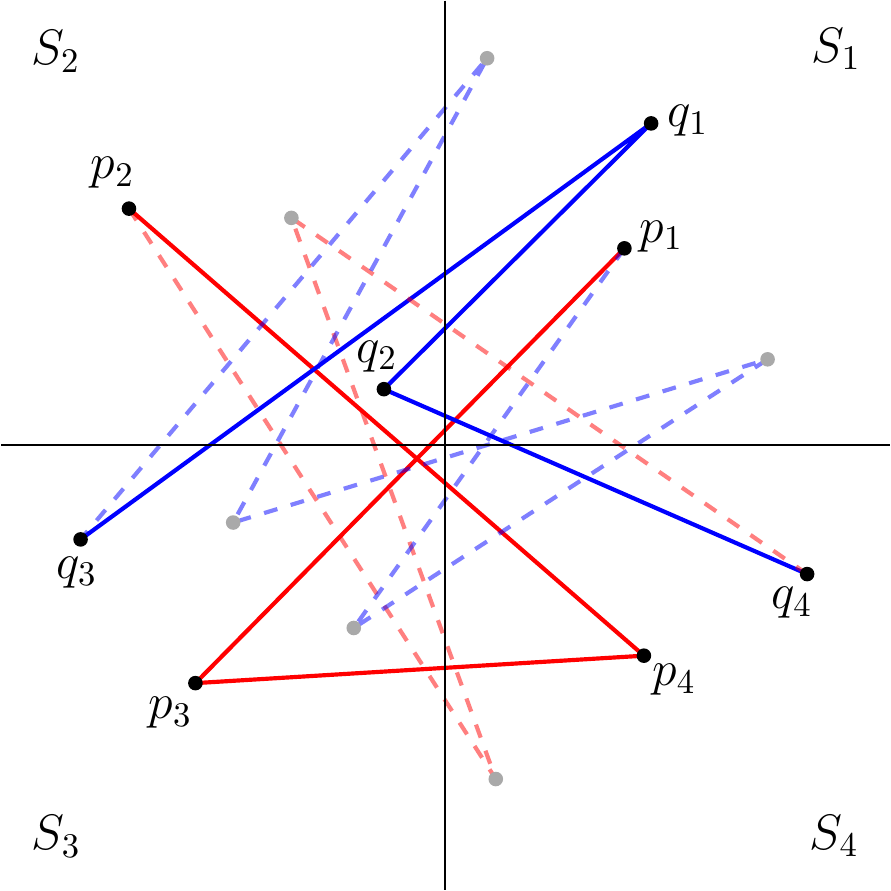}
	\end{center}
\vspace{-15pt}
\caption{Illustration of Case 1.}
\label{case-i}
	\vspace{-5pt}
\end{wrapfigure}
\paragraph{Case 1:} $\cal Q$ {\em contains two quadruples such that one is upward and the other is downward.} Let $P$ and $Q$ be such quadruples where $P$ is upward and $Q$ is downward. Since $P$ is upward, the path  $p_1p_3p_4p_2$ is acute. Since $Q$ is downward, the path $q_4q_2q_1q_3$ is acute; see Figure~\ref{case-i}. 
Let $\overline{S_2S_4}$ be a polygonal path starting from $p_2$, ending in $q_4$, alternating between $S_2$ and $S_4$, and containing all points of $S_2\cup S_4$ except for $q_2$ and $p_4$.
Let $\overline{S_3S_1}$ be a polygonal path starting from $q_3$, ending in $p_1$, alternating between $S_3$ and $S_1$, and containing all points of $S_3\cup S_1$ except for $p_3$ and $q_1$. Such polygonal paths exist because by Lemma~\ref{equitable-partition-lemma} we have $|S_2|=|S_4|$ and $|S_1|=|S_3|$. All intermediate angles of these two polygonal paths are acute by Observation~\ref{opposite-quadrantobs}. Then the tour $p_1p_3p_4p_2\oplus\overline{S_2S_4}\oplus q_4q_2q_1q_3\oplus\overline{S_3S_1}$ is acute, and it spans $S$. Notice that the angles at $p_1$, $p_2$, $q_3$ and $q_4$ are acute by Observation~\ref{opposite-quadrantobs}.

\paragraph{Case 2:}$\cal Q$ {\em contains three concave-obtuse quadruples of the same type.} Let $P$, $Q$ and $R$ be such quadruples, and without loss of generality assume that they are upward. Thus, the paths $p_2p_4p_3p_1$ and $q_2q_4q_3q_1$ and $r_2r_4r_3r_1$ are acute. Since $P$, $Q$ and $R$ are concave-obtuse their centers should lie at endpoints of these paths (the centers cannot be interior vertices of acute paths). Thus the center of $P$ is either $p_1$ or $p_2$, the center of $Q$ is either $q_1$ or $q_2$, and the center of $R$ is either $r_1$ or $r_2$. This means that the centers lie in quadrants 1 and 2. By the pigeonhole principle, and after a suitable reflection, we may assume that at least two of the centers lie in quadrant 2. After a suitable relabeling assume that the centers of $P$ and $Q$ (i.e. $p_2$ and $q_2$) lie in quadrant 2. The center of $R$ lies either in quadrant 2 (i.e. it is $r_2$) or in quadrant 1 (i.e. it is $r_1$).

After a suitable relabeling assume that $p_2$ lies below $q_2$, as in Figure~\ref{proof-fig2}. Now we build our tour as follows. First we connect $p_2$ to $p_1$ and $q_1$. The point $p_2$ is below $p_1$ because $p_2$ lies below the segment $p_1p_3$. The point $p_2$ is also below $q_1$ because $p_2$ is below $q_2$ which is in turn below $q_1$ (as $q_2$ lies below the segment $q_1q_3$). Thus $p_2$ is below both $p_1$ and $q_1$. Also notice that $p_2$ is to the left of both $p_1$ and $q_1$. Thus, the angle $\angle p_1p_2q_1$ is acute (imagine moving the origin to $p_2$, then both $p_1$ and $q_1$ would lie in the first quadrant).
Then we connect $q_3$ to $q_1$ and $q_4$. The angle $\angle q_4q_3q_1$ is acute because $Q$ is upward (i.e. the path $q_2q_4q_3q_1$ is acute). The angle $\angle p_2q_1q_3$ is acute because both $p_2$ and $q_3$ lie below and to the left of $q_1$. Therefore, the path $p_1p_2q_1q_3q_4$ is acute; see Figure~\ref{proof-fig2}.
In the rest of the construction we distinguish two subcases, depending on the center of $R$.

\begin{figure}[htb]
	\centering
	\setlength{\tabcolsep}{0in}
	$\begin{tabular}{cc}
		\multicolumn{1}{m{.5\columnwidth}}{\centering\vspace{0pt}\includegraphics[width=.45\columnwidth]{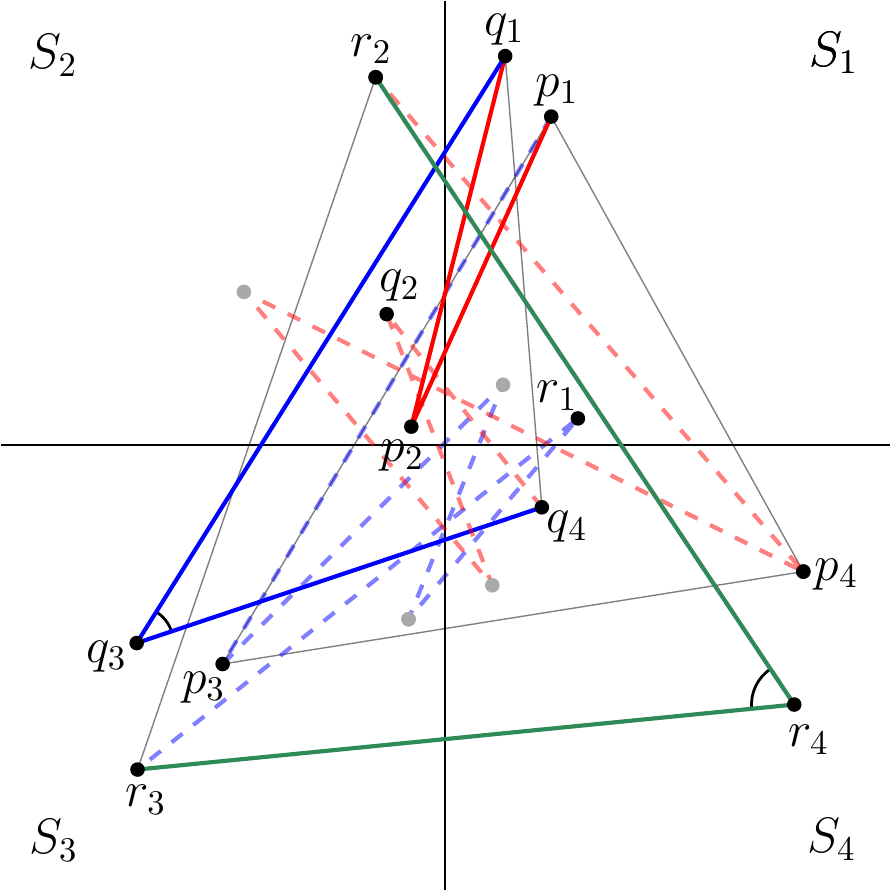}}
		&\multicolumn{1}{m{.5\columnwidth}}{\centering\vspace{0pt}\includegraphics[width=.45\columnwidth]{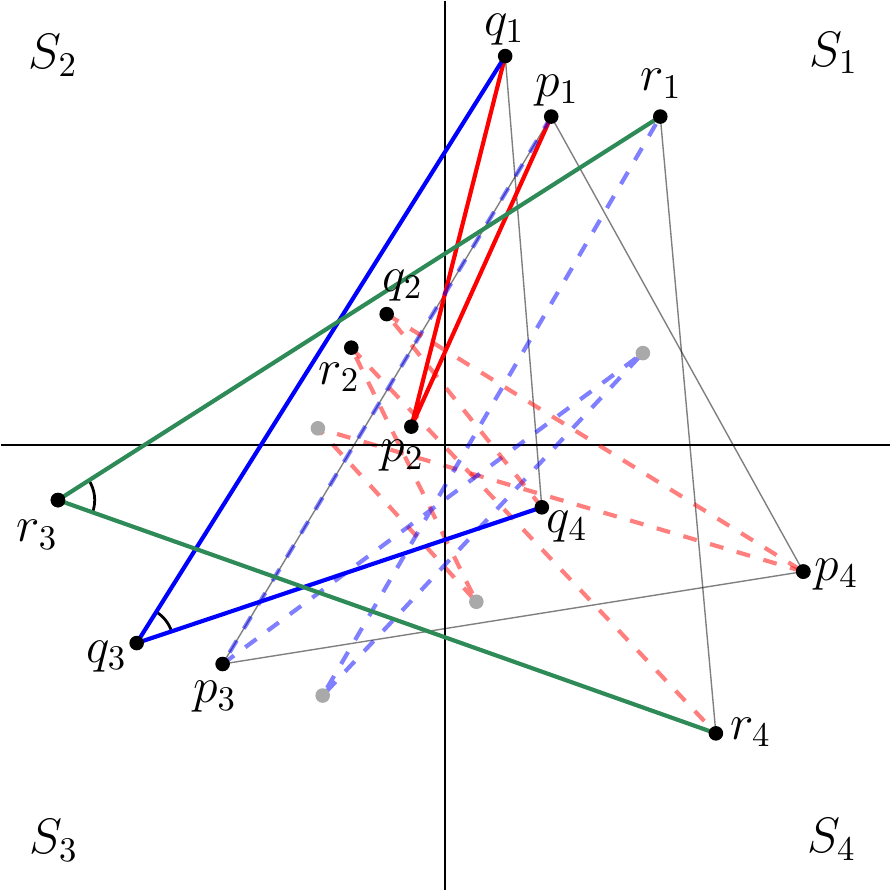}}
		\\
		(a)   &(b) 
	\end{tabular}$
	\caption{Illustration of Case 2. Three concave-obtuse quadruples $P$, $Q$ and $R$ that are upward, and the centers of $P$ and $Q$ lie in quadrant 2. (a) Subcase 2.1 where the center of $R$ is in quadrant 1. (b) Subcase 2.2 where the center of $R$ is in quadrant 2.}
	\label{proof-fig2}
\end{figure}

\paragraph{Subcase 2.1:} {\em The center of $R$ is $r_1$.} This case is depicted in Figure~\ref{proof-fig2}(a). We connect $r_4$ to $r_2$ and $r_3$. The resulting path $r_2r_4r_3$ is acute (because $R$ is upward, i.e. the path $r_2r_4r_3r_1$ is acute). 
Let $\overline{S_4S_2}$ be a polygonal path starting from $q_4$, ending in $r_2$, alternating between $S_4$ and $S_2$, containing all points of $S_4\cup S_2$ except for $r_4,p_2$, and having $q_4q_2$ as its first edge. Let $\overline{S_3S_1}$ be a polygonal path starting from $r_3$, ending in $p_1$, alternating between $S_3$ and $S_1$, containing all points of $S_3\cup S_1$ except for $q_3,q_1$, and having $r_3r_1$ as its first edge and $p_3p_1$ as its last edge. All intermediate angles of these two paths are acute by Observation~\ref{opposite-quadrantobs}. By interconnecting the constructed paths we obtain the tour  $p_1p_2q_1q_3q_4\oplus\overline{S_4S_2}\oplus r_2r_4r_3\oplus\overline{S_3S_1}$ which is acute, and it spans $S$. The angles at $p_1,r_3,q_4$ are acute by Lemma~\ref{obtuse-quadruple-lemma}, and the angle at $r_2$ is acute by Observation~\ref{opposite-quadrantobs}.

\paragraph{Subcase 2.2:} {\em The center of $R$ is $r_2$.} This case is depicted in Figure~\ref{proof-fig2}(b). We connect $r_3$ to $r_4$ and $r_1$. The resulting path $r_4r_3r_1$ is acute (because $R$ is upward, i.e. the path $r_2r_4r_3r_1$ is acute).  
Let $\overline{S_4S_2S_4}$ be a polygonal path starting from $q_4$, ending in $r_4$, alternating between $S_4$ and $S_2$, containing all points of $S_4\cup S_2$ except for $p_2$, and having $q_4q_2$ as its first edge and $r_2r_4$ as its last edge.
Let $\overline{S_1S_3S_1}$ be a polygonal path starting from $r_1$, ending in $p_1$, alternating between $S_1$ and $S_3$, containing all points of $S_1\cup S_3$ except for $q_1,q_3,r_3$, and having $p_3p_1$ as its last edge. Intermediate angles of these paths are acute by Observation~\ref{opposite-quadrantobs}.
Thus  $p_1p_2q_1q_3q_4\oplus\overline{S_4S_2S_4}\oplus r_4r_3r_1\oplus\overline{S_1S_3S_1}$ is an acute spanning tour.
The angles at $q_4$, $r_4$, and $p_1$ are acute by Lemma~\ref{obtuse-quadruple-lemma}, and the angle at $r_1$ is acute by Observation~\ref{opposite-quadrantobs}.
This finishes our proof of Theorem~\ref{main-thr}.

\section{Concluding remarks}
We showed how to construct an acute tour on any set of $n$ points in the plane, where $n$ is even and at least $20$. Our construction uses at most 12 points in each case (namely the points of quadruples $P$, $Q$ and $R$). One might be interested to extend the range of $n$ (to smaller even numbers) by taking advantage of the $8$ unused points, although this may require some case analysis. 

\paragraph{Acknowledgement.} 
I am very grateful to the anonymous SoCG 2022 reviewer who meticulously verified our proof, and provided valuable feedback that reduced the number of subcases to two (which was three in our original proof) and improved the bound on $n$ to $20$ (which was $36$ originally). 

\bibliographystyle{abbrv}
\bibliography{Acute-Tours.bib}
\end{document}